\newtheorem{theorem}{Theorem}
\begin{document}

\title{Taming Convergence in the Determinant Approach for X-Ray Excitation Spectra}

\author{Yufeng Liang}
\email{yufengliang@lbl.gov, pcbee912@gmail.com}
\affiliation{The Molecular Foundry, Lawrence Berkeley National Laboratory, Berkeley, CA 94720, USA}
\author{David Prendergast}
\affiliation{The Molecular Foundry, Lawrence Berkeley National Laboratory, Berkeley, CA 94720, USA}

\begin{abstract}
A determinant formalism in combination with \emph{ab initio} calculations was proposed recently \cite{liang2017accurate, liang2018quantum} and has paved a new way for simulating and interpreting x-ray excitation spectra in condensed-phase systems.
The new method systematically takes into account many-electron effects in the Mahan-Nozi\'eres-De Dominicis (MND) theory,
including core-level excitonic effects, the Fermi-edge singularity, shakeup excitations, and wavefunction overlap effects such as the orthogonality catastrophe, 
all within a universal framework using many-electron configurations.
A heuristic search algorithm \cite{liang2018quantum} was introduced to search for the configurations that are important for defining the x-ray spectral lineshape, instead of enumerating them in a brute-force way.
The algorithm has proven to be efficient for calculating \ce{O} $K$ edges of transition metal oxides, 
which converge at the second excitation order (denoted as $f^{(n)}$ with $n = 2$), i.e., the final-state configurations with two \emph{e-h} pairs (with one hole being the core hole).
However, it remains unknown how the determinant x-ray spectra converge for general cases and at which excitation order $n$ should one stop the determinant calculation. 
Even with the heuristic algorithm, the number of many-electron configurations still grows exponentially with the excitation order $n$.
In this work, we prove two theorems that can indicate the order of magnitude of the contribution of the $f^{(n)}$ configurations,
so that one can estimate their contribution very quickly without actually calculating their amplitudes.
The two theorems are based on the singular-value decomposition (SVD) analysis, a method that is widely used to quantify entanglement between two quantum many-body systems.
We examine the $K$ edges of several metallic systems with the determinant formalism up to $f^{(5)}$ to illustrate the usefulness of the theorems. 
\end{abstract}
\maketitle

\section{Introduction}
 
X-ray spectroscopy has become increasingly important for providing insights into many problems in materials characterization at microscopic scale \cite{de2008core, fadley2010x, singh2010advanced, ament2011resonant, van2016x}, 
especially in recent times when this area is propelled by the development of light sources enabled by free-electron lasers \cite{emma2010first, schlotter2012soft, lemke2013femtosecond}.
We consider resonant x-ray excitations, where a core electron of a specific type of atoms is promoted into orbitals localized on or near that atomic site, revealing a wealth of information of local chemical environment and electronic structure.
The near-edge part of a x-ray spectrum, i.e., a few eVs above onset, is of particular interest, where most intriguing physics and chemical processes are related to \cite{de2008core}.
The interpretation of x-ray spectra, however, is often a nontrivial task that requires accurate first-principles modeling of both atomic and electronic structure of interest and their associated spectra \cite{prendergast2006x, cabaret2010first, gougoussis2009first}.
While the structural properties of a wide range of materials can be mostly captured by density-functional theory (DFT) \cite{parr1980density, dudarev1998electron, bickelhaupt2000kohn, wang2006oxidation, hautier2012accuracy}, 
predicting x-ray excited-state spectra in a reliable and efficient manner presents a greater theoretical challenge. 

There has been a broad spectrum of theoretical approaches for simulating x-ray excitation spectra.
At one extreme, exact diagonalization \cite{chen1991electronic, hybertsen1992model, haverkort2012multiplet, kourtis2012exact, uldry2012systematic, chen2013doping, jia2016using, lu2017nonperturbative} has been applied to rigorously solve the many-electron Hamiltonian.
This method represents the most accurate solutions and is most suitable for localized x-ray excitations that occur within a few atomic orbitals, due to the exponential growth of the 
many-electron Hilbert space. 
Wannier down-folding is typically required for reducing the size of the Hamiltonian \cite{haverkort2012multiplet, uldry2012systematic} and the Coulomb interaction is usually simplified as an on-site Hubbard $U$ term.
Similar methodology is also employed in the quantum chemistry community within the configuration interaction \cite{sherrill1999configuration, tubman2016deterministic} and other post-Hartree-Fock methods \cite{bartlett2007coupled, szalay2011multiconfiguration},
in which \emph{realistic} Coulomb interactions and superposition of many-electron configurations are considered explicitly in the calculation. 
Likewise, increasing the size of the excited-state calculation is hindered by the exponential barrier associated with the size of the Hilbert space, 
although placing restrictions on the active orbital space can mitigate the problem to some extent \cite{roemelt2013excited, pinjari2014restricted, maganas2017restricted}.
Due to its rigor, this class of methods can be used for system with strong electron correlation, 
but due to its computational inefficiency, it is limited to specific cases such as small molecules \cite{roemelt2013excited, maganas2017restricted} or clusters \cite{lu2017nonperturbative} and $3d$ metal $L$ edges (dominate d by localized atomic multiplet effects) \cite{haverkort2012multiplet, jia2016using} so far.
 
At the other extreme, excited-state spectra of condensed phases can be routinely obtained using \emph{ab initio} methods based on DFT and many-body perturbation theory (MBPT), where electron correlation is treated with less rigor but higher efficiency.
There are two representative methods.
One is the Delta-self-consistent-field ($\Delta$SCF) approach \cite{nyberg1999core, taillefumier2002x, prendergast2006x, liang2017accurate} that treats the core hole as an external potential, and maps a many-electron excited state to an empty orbital. 
Many-electron response is taken into account by DFT electronic relaxation, and then the transition matrix elements are calculated using the Fermi's Golden rule.
In previous formalisms, however, only one-body orbitals are employed for final states, which does not account for proper time ordering of the many-electron processes in x-ray excitations \cite{liang2017accurate, liang2018quantum}, 
leading to possible failures in predicting the intensity of an absorption edge.
The other method is the core-level Bethe-Salpeter Equation (BSE) \cite{shirley1998ab, olovsson2009all, vinson2011bethe} within MBPT, that utilizes DFT ground state as a zero-order approximation and incorporate many-electron correlations in a \emph{perturbative} manner. 
A subset of Feynman diagrams generated by the direct and exchange kernels are included to account for \emph{e-h} interactions (excitonic effects). and hence the correct time-ordering is retained in this approach.
Formulations akin to the BSE are also adopted in some versions of time-dependent DFT (TDDFT) \cite{besley2009time, besley2010time, lopata2012linear, zhang2012core}, 
in which exchange-correlation kernels play the roles of \emph{e-h} interaction kernel, 
and real-time evolution could be involved \cite{lopata2012linear}.
There are many successful application of the $\Delta$SCF core-hole approach \cite{nyberg1999core, taillefumier2002x, prendergast2006x, pascal2014finite, velasco2014structure} and the core-level BSE \cite{vinson2011bethe, liang2017accurate, zhan2017enabling} in extended systems, and even in molecules \cite{nyberg1999core} where the Coulomb interaction is not well screened.

\subsection{Motivations and Advantages of the Determinant Formalism}
Following the philosophy of the post DFT methods, we recently proposed a determinant formalism \cite{liang2017accurate, liang2018quantum} for simulating x-ray excitation spectra based on the one-body core-hole approach.
The determinant formalism has three main advantages:
First, it provides an exact solution to all the many-electron effects considered within the Mahan-Nozi\'eres-De Dominicis (MND) model \cite{mahan1967excitons, nozieres1969singularities, ohtaka1990theory, mahan2013many}, 
which is beyond the scope of \emph{e-h} attraction in the BSE. 
These effects include the power-law edge singularity as considered by Mahan using Feynman diagrams \cite{mahan2013many},
and the many-body wavefunction overlap effect considered by Anderson in the orthogonality catastrophe \cite{anderson1967pw}.
In a less dramatic manner, the latter often manifests as \emph{shakeup effects} \cite{kim1975x, stern1983many, calandra2012k}, but not edge-rounding effects. 
The shake-up effects are the fragmentation of an absorption feature to higher energy, which would not be described by the BSE.

Secondly, the determinant formalism adopts many-electron configurations in the calculation, and hence provides a conceptually simple picture for understanding x-ray excited states.
In the x-ray final-state system, a many-body state is simply a single determinant (single reference) of the occupied Kohn-Sham orbitals. 
Each excited state is now mapped to its composite orbitals.
We denote the group of singles as $f^{(1)}$, which comprises a core hole and an electron; the groups of doubles as $f^{(2)}$, which comprises an \emph{additional} valence \emph{e-h} pair; and so forth.
The $n$ as in $f^{(n)}$ can be understood as the order of excitations.
\emph{This enables a straightforward assignment of absorption features to excited states,  making the interpretation of the spectrum simple.}

Thirdly, an alternative solution for solving the MND Model using DFT orbitals as input was introduced in Ref. \cite{wessely2005ab, wessely2006dynamical, wessely2007dynamical, ovcharenko2013specific}, and good agreement has been achieved in some carbon systems and metals.
In this method, a time-dependent matrix integral equation needs to be solved and matrix inversion is required for each point on the discretized time axis.
Then a Fourier transformation is performed to obtain the spectrum.
The determinant approach we proposed only requires a one-shot matrix inversion and no real-time evolution is involved.
It is physically more intuitive and computationally less complex. 

Finally, although many-electron configurations are employed for condensed-phase systems with hundreds to thousands of electrons in a supercell, 
the determinant calculation is \emph{not at all intractable}.
We have developed a heuristic search algorithm for finding the many-electron configurations that are important for determining the x-ray spectra \cite{liang2018quantum}.
For transition metal oxides (TMOs), it is found the x-ray absorption spectra (XAS) can be well defined by just $10^5$ configurations up to the $f^{(2)}$ order (not their superposition and no diagonalization of many-body Hamiltonian is needed), which are inexpensive calculations given the current computational capability.


It is, however, unknown yet how expensive the determinant calculations are in other systems apart from TMOs.
It remains unclear if $f^{(n)}$ configurations with $n > 2$ are important for shaping the x-ray excitation spectra.
Even with the heuristic search algorithm \cite{liang2018quantum} that truncates the number of many-electron configurations,
the number of meaningful $f^{(n)}$ configurations still tends to diverge exponentially with respect to the shakeup order $n$.
If one can estimate the spectral contributions from the $f^{(n)}$ configurations before actually calculating them, 
it will be of great benefit for saving computational resources.
In this work, we propose a useful criterion for estimating the contributions of $f^{(n)}$, which helps one to decide at what $n$ one should stop the determinant calculation.
It is based on an singular-value decomposition (SVD) analysis of the $\zeta$-matrix, the auxiliary matrix used to obtain the determinant transition amplitude.
An SVD analysis of the $\zeta$-matrix reflects how the one-body basis set is rotated due to the perturbation of the core hole,
and how the final-state occupied manifold is entangled with the initial state (typically the many-electron ground state).
Besides TMOs, we have chosen several metallic systems such as \ce{Li} and \ce{Cu} metal and performed exhaustive calculations up to $n = 5$ to test the convergence criterion in this work. 

\subsection{Review of the Determinant Formalism}

The central formula for calculating the x-ray absorption amplitudes from the ground-state (initial state) $|\Psi_i\rangle$ to a specific final state $|\Psi_f\rangle$ is \cite{liang2017accurate, liang2018quantum}
\begin{align}
\begin{split}
\langle \Psi_f | \mathcal{O} | \Psi_i \rangle = \sum_{c} (A^{f}_c)^* \langle \psi_c | o | \psi_h \rangle 
\end{split}
\label{eq:mat_elem}
\end{align}
in which the transition amplitude also takes a determinant form
\begin{align}
\begin{split}
A^{f}_c=
\det
\begin{bmatrix}
\xi_{f_1, 1} & \xi_{f_1, 2} & \cdots & \xi_{f_1, N} & \xi_{f_1, c} \\
\xi_{f_2, 1} & \xi_{f_2, 2} & \cdots & \xi_{f_2, N} & \xi_{f_2, c} \\
\vdots & & \ddots & & \vdots\\
\xi_{f_{N+1}, 1} & \xi_{f_{N+1}, 2} & \cdots & \xi_{f_{N+1}, N} & \xi_{f_{N+1}, c} \\
\end{bmatrix}
\end{split}
\label{eq:afc}
\end{align}
and $\xi_{ij} = \langle \psi_j | \tilde{\psi}_i \rangle$, where $|  \tilde{\psi}_i \rangle$ is one final-state orbital and $|  \psi_j \rangle$ is one initial-state orbital.
Alternatively, $A^{f}_c$ can be regarded as a \lq\lq Slater determinant\rq\rq of a set of final-state orbitals $( \tilde{\psi}_{f_1}, \tilde{\psi}_{f_2}, \cdots, \tilde{\psi}_{f_{N+1}} )$, 
expanded over the lowest $N + 1$ initial-state orbitals $( \psi_{1}, \psi_{2}, \cdots, \psi_{N+1} )$ (rather than $( \bm{r}_1, \bm{r}_2, \cdots, \bm{r}_{N + 1} )$).
The first step we use to simplify the calculation is to move the summation over $c$ into the determinant expression:
\begin{align}
\begin{split}
A^f & \equiv \langle \Psi_i | \mathcal{O} | \Psi_f \rangle = \det \bm{A}^f\\
\bm{A}^f & =  
\begin{bmatrix}
\xi_{f_1, 1} & \xi_{f_1, 2} & \cdots & \xi_{f_1, N} & \sum_c \xi_{f_1, c} w^*_c\\
\xi_{f_2, 1} & \xi_{f_2, 2} & \cdots & \xi_{f_2, N} & \sum_c \xi_{f_2, c} w^*_c \\
\vdots & & \ddots & & \vdots\\
\xi_{f_{N+1}, 1} & \xi_{f_{N+1}, 2} & \cdots & \xi_{f_{N+1}, N} & \sum_c \xi_{f_{N+1}, c} w^*_c \\
\end{bmatrix}
\end{split}
\label{eq:Af}
\end{align}
There are an enormous number of combinations of $(f_1, f_2, \cdots, f_{N + 1})$, representing possible final-state configurations,
but it is only meaningful to visit a small subset of this space.
The determinant amplitude $A^f$ is a significant number 
only when the most of $(f_1, f_2, \cdots, f_{N + 1})$ overlaps significantly with the lowest-energy configuration $(1, 2, \cdots, N + 1)$, 
because in most realistic materials, the core hole only slightly perturbs the system and the orthogonality catastrophe does not occur.
Therefore, $(f_1, f_2, \cdots, f_{N + 1})$ may only differ from $(1, 2, \cdots, N + 1)$ by a few indices.

To simplify the notation, we may denote $(f_1 = 1, f_2 = 2, \cdots, f_{N} = N, f_{N + 1} = c)$ as a single, or $f^{(1)}$ configuration,  
 $(f_1 = 1, f_2 = 2,\cdots, f_{v_1 - 1} = v_1 - 1, f_{v_1} = v_1 + 1, \cdots, f_{N - 1} = N, f_N = c, f_{N + 1} = c_1)$ as a double, or $f^{(2)}$ configuration, and so forth.

Each line of $\bm{A}^f$ can be denoted as:
\begin{align}
\begin{split}
a_i = 
\begin{bmatrix}
\xi_{i, 1} & \cdots & \xi_{i, N} &  \sum_c \xi_{i, c} w^*_c
\end{bmatrix}
\end{split}
\end{align}
Next we can introduce the $\zeta$-matrix to calculate all the $A^f$'s in an iterative manner. 
The $\zeta$-matrix is the linear transformation from $a_i$'s of the occupied orbitals  ($i \leq N$ plus $a_{N+1}$) to the $a_i$'s of the empty orbitals (i > N)
\begin{align}
\begin{split}
\begin{bmatrix}
a_{N+1} \\ a_{N+2} \\ \vdots \\ a_M
\end{bmatrix}
=
\begin{bmatrix}
0 & 0 & \cdots & 1 \\
\zeta_{N+2, 1} &  \zeta_{N+2, 2}& \cdots & \zeta_{N+2, N+1} \\
\vdots & \vdots & & \vdots \\
\zeta_{M, 1} &  \zeta_{M, 2}& \cdots & \zeta_{M, N+1}
\end{bmatrix}
\begin{bmatrix}
a_{1} \\ a_{2} \\ \vdots \\ a_{N+1}
\end{bmatrix}
\end{split}
\label{eq:zeta_def}
\end{align}
Rewriting the above matrix multiplication in a compact form, we have $\bm{A}^\text{new} = \bm{\zeta} \bm{A}^\text{ref}$,
where $(\bm{\zeta})_{ij} = \zeta_{N + i, j}$.
Thus the $\zeta$-matrix can be obtained from $\bm{\zeta} = \bm{A}^\text{new} (\bm{A}^\text{ref})^{-1} $.

With the auxiliary $\zeta$ matrix, we can quickly calculate the determinants for many excited-state configurations without repeatedly using the $\mathcal{O}(N^3)$ determinant algorithm.
Instead, an $\mathcal{O}(1)$ algorithm can be used.
Once the determinant for the ground state, $A^\text{ref}$,  is obtained, the determinant of each excited state can be computed by multiplying $A^\text{ref}$ by a pre-factor.
The pre-factor is a small determinant that composes the matrix elements of the $\zeta$ matrix.
For example, the amplitude of a single ($f^{(1)}$), double ($f^{(2)}$), and triple  ($f^{(3)}$) configuration can be obtained respectively as
\begin{align}
\begin{split}
A^c &= \zeta_{c, N} A^{\text{ref}} \\
A^{c; c_1, v_1} &= \det 
\begin{bmatrix}
\zeta_{c, v_1} & \zeta_{c, N} \\
\zeta_{c_1, v_1} &  \zeta_{c_1, N} \\
\end{bmatrix}
A^{\text{ref}}\\
A^{c; c_1, v_1;c_2,v_2} &= \det 
\begin{bmatrix}
\zeta_{c, v_2} & \zeta_{c, v_1} & \zeta_{c, N} \\
\zeta_{c_1, v_2} & \zeta_{c_1, v_1} &  \zeta_{c_1, N} \\
\zeta_{c_2, v_2} & \zeta_{c_2, v_1} &  \zeta_{c_2, N} 
\end{bmatrix}
A^{\text{ref}}
\end{split}
\label{eq:mat_elem}
\end{align}
These are the essential formulae to obtain the amplitudes for the excited-state configuration with the $\mathcal{O}(1)$ updating algorithm.
A $f^\text{(n)}$ configuration corresponds to a $n \times n $ sub-determinant of the $\zeta$ matrix. 

Directly enumerating all such sub-determinants will be a computationally expensive task.
Also, it is not necessary to do so because the $\zeta$ matrix could be a sparse matrix.
In a previous work \cite{liang2018quantum}, we have proposed a heuristic search algorithm for quickly finding significant sub-determinants of a sparse matrix.
Our general search algorithm is not merely specific to the $\zeta$ matrix for x-ray spectroscopic problems.

What was overlooked in the previous work, however, is that the $\zeta$ matrix for x-ray excitations does have structures. 
It can be seen that the $\zeta{}$ matrices for \ce{SiO2}, \ce{TiO2}, \ce{CrO2} all display a vertical (horizontal) stripe pattern (Fig. 8 (d) of Ref. \cite{liang2018quantum}).
This stripe pattern can also be seen in other chosen examples, which are discussed later in this work.
These stripe patterns imply a further simplification of the $\zeta$ matrix that it can be approximately expressed as the Kronecker product of two vectors:
\begin{align}
\begin{split}
\zeta_{ij} \sim a_i b_j
\end{split}
\label{svd}
\end{align}

If $\zeta_{ij} \sim a_i b_j$ strictly holds, then all the $n\times n$ sub-determinants for $n > 1$ will be exactly zero and only the $f^{(1)}$ amplitudes are non-vanishing.
It is the deviation of $\zeta_{ij}$ from $a_i b_j$ that determines the size of higher-order terms $f^{(n)}$ ($n > 1$).
If one can expand $\zeta_{ij}$ into just a few terms, then it is highly probable that the size of higher-order terms can be quickly estimated.
In this regard, a singular-value decomposition (SVD) of the $\zeta$ matrix is most relevant for this problem.
SVD has been widely used to analyze the entanglement structure between two quantum many-body systems \cite{vidal2007classical, li2008entanglement, schollwock2011density}.

In the following discussion, we first provide and prove two theorems that will give upper bounds on the size of the $f^{(n)}$ terms, for a specific $n$, using SVD analysis for the $\zeta$ matrix.
The bounds will enable one to determine the contribution of the $f^{(n)}$ terms to the x-ray excitation spectrum, without explicitly calculating these terms.
This will save a substantial amount of computational cost and help one obtain a meaningful x-ray excitation spectrum faster.
Then we apply the theorems to several small band-gap and metallic systems, in which higher-order terms $f^{(n)}$ ($n > 1$) are expected to contribute to the spectrum significantly.
It is, however, found that in \emph{none} of the chosen systems, the contribution from $f^{(n)}$ ($n > 2$) can significantly alter the spectral lineshapes (more precisely, the peak intensity ratios).
In other words, the spectra have already taken shape at the order of $n = 2$.

\section{Results and Discussion}

\subsection{Two theorems about sub-determinants}
\begin{theorem} 
Let $D$ be the determinant of an $n \times n $ sub-matrix that spans over rows $i_1, i_2, \cdots, i_n$ and columns $j_1, j_2, \cdots, j_n$ of an$N \times M$ matrix $\bm{\zeta}$.
Suppose the singular-value decomposition (SVD) of $\bm{\zeta}$ is
\begin{align}
\begin{split}
\zeta_{ij} = \sum_{k} s^k a^k_i b^k_j
\end{split}
\label{svd}
\end{align}
where $\{ s^k \}$ are the singular values of $\bm{\zeta}$, and $a^k_i$ ($b^k_j$) is a normalized vector for a given $k$.
Then the determinant $D$ can be expanded as the summation
\begin{align}
\begin{split}
D=\sum_{k_1 < k_2 < \cdots < k_n} s^{k_1} s^{k_2} \cdots s^{k_n} D^a_{[k_\mu]} D^b_{[k_\mu]}
\end{split}
\end{align}
\label{theorem1}
in which $D^a_{[k_\mu]}$ ($D^b_{[k_\mu]}$) is the determinant of the sub-matrix that spans over rows $i_1, i_2, \cdots, i_n$ and columns $k_1, k_2, \cdots, k_n$ of the matrix $a^k_i$ ($b^k_j$).
\end{theorem}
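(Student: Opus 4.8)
The plan is to substitute the SVD expansion $\zeta_{ij} = \sum_k s^k a^k_i b^k_j$ directly into the Leibniz (permutation) expansion of the sub-determinant $D$, and then reorganize the resulting multi-sum by the Cauchy--Binet theorem. Concretely, write
\begin{align}
\begin{split}
D = \sum_{\sigma \in S_n} \mathrm{sgn}(\sigma)\, \prod_{\mu=1}^{n} \zeta_{i_\mu, j_{\sigma(\mu)}}
= \sum_{\sigma} \mathrm{sgn}(\sigma) \prod_{\mu=1}^{n} \left( \sum_{k_\mu} s^{k_\mu} a^{k_\mu}_{i_\mu} b^{k_\mu}_{j_{\sigma(\mu)}} \right).
\end{split}
\end{align}
Expanding the product over $\mu$ turns this into a sum over all $n$-tuples $(k_1, \dots, k_n)$ of singular-value indices, with weight $\prod_\mu s^{k_\mu}$, times $\sum_\sigma \mathrm{sgn}(\sigma) \prod_\mu a^{k_\mu}_{i_\mu} b^{k_\mu}_{j_{\sigma(\mu)}}$.

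The next step is to recognize the two factors. Since the $a$-factor $\prod_\mu a^{k_\mu}_{i_\mu}$ does not depend on $\sigma$, it pulls out, and the $\sigma$-sum of $\mathrm{sgn}(\sigma)\prod_\mu b^{k_\mu}_{j_{\sigma(\mu)}}$ is by definition $\det\big[ b^{k_\nu}_{j_\mu} \big]_{\mu,\nu=1}^n$, the determinant of the $n\times n$ matrix whose $(\mu,\nu)$ entry is $b^{k_\nu}_{j_\mu}$ — i.e. the minor of the $b$-matrix on rows $j_1,\dots,j_n$ and columns $k_1,\dots,k_n$ (up to a column permutation that I must track carefully). So $D = \sum_{(k_1,\dots,k_n)} \big(\prod_\mu s^{k_\mu}\big) \big(\prod_\mu a^{k_\mu}_{i_\mu}\big) \, D^b_{(k_1,\dots,k_n)}$. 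The $a$-part is not yet a determinant, but this is exactly the Cauchy--Binet situation: I would then restrict, in the final reorganization, to strictly increasing tuples $k_1 < k_2 < \cdots < k_n$, and for each such tuple collect the $n!$ permutations of its entries. Summing $\prod_\mu a^{k_{\pi(\mu)}}_{i_\mu}$ against the corresponding signed $b$-minor reconstitutes the antisymmetrized $a$-minor $D^a_{[k_\mu]}$ together with its sign, leaving precisely $\sum_{k_1<\cdots<k_n} s^{k_1}\cdots s^{k_n}\, D^a_{[k_\mu]} D^b_{[k_\mu]}$. Tuples with a repeated index contribute zero because the corresponding $a$-minor (equivalently $b$-minor) has two equal columns.

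The main obstacle is bookkeeping rather than conceptual: I must be meticulous about sign conventions and the ordering of columns $k_1,\dots,k_n$ so that the permutation signs from the Leibniz expansion, the reordering of singular-value indices into increasing order, and the definitions of $D^a_{[k_\mu]}$ and $D^b_{[k_\mu]}$ all combine consistently — a sign error here would be invisible at the level of the sketch but fatal in the formula. The cleanest route is probably to phrase the whole argument as a direct invocation of the Cauchy--Binet formula applied to the factorization $\zeta^{\mathrm{sub}} = A^{\mathrm{sub}} \,\mathrm{diag}(s^k)\, (B^{\mathrm{sub}})^{T}$ of the $n\times n$ sub-matrix, where $A^{\mathrm{sub}}$ has entries $a^k_{i_\mu}$ and $B^{\mathrm{sub}}$ has entries $b^k_{j_\mu}$; then $D = \det(\zeta^{\mathrm{sub}})$ is immediately $\sum_{k_1<\cdots<k_n}\det(A^{\mathrm{sub}}_{[k_\mu]})\,\big(\prod_\mu s^{k_\mu}\big)\,\det(B^{\mathrm{sub}}_{[k_\mu]})$, which is the claim. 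I would present it this way and relegate the elementary verification of the factorization and of Cauchy--Binet to a line or two.
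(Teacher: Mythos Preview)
Your proposal is correct and follows essentially the same route as the paper: both substitute the SVD expansion into the Leibniz/Levi-Civita formula for the sub-determinant, identify the inner permutation sum as $D^b_{[k_\mu]}$, discard tuples with repeated $k$'s, and then collect the $n!$ permutations of each ordered tuple to reconstitute $D^a_{[k_\mu]}$. Your closing remark that the whole thing is just Cauchy--Binet applied to $\zeta^{\mathrm{sub}} = A^{\mathrm{sub}}\,\mathrm{diag}(s^k)\,(B^{\mathrm{sub}})^{T}$ is a slightly cleaner packaging than the paper's explicit index manipulation, but it is the same argument.
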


\begin{proof}
Without loss of generality, we may assume $i_1 = j_1 = 1$, $i_2 = j_2 = 2$, $\cdots$, $i_n = j_n = n$.
According to the definition of determinant, $D$ can be expanded using the Levi-Civita symbol:
\begin{align}
\begin{split}
D=\sum^n_{l_1 l_2 \cdots l_n = 1} \varepsilon_{l_1 l_2 \cdots l_n} \zeta_{1l_1} \zeta_{2 l_2} \cdots \zeta_{n l_n}
\end{split}
\end{align}
Inserting the SVD expression of the matrix element $\zeta_{ij}$ as in Eq. [\ref{svd}] (examples of $\bm{\zeta}$ can be found in Fig. \ref{svd}), 
\begin{align}
\begin{split}
D& =\sum^n_{l_1 l_2 \cdots l_n = 1} \varepsilon_{l_1 l_2 \cdots l_n}  \big( \sum_{k_1} s^{k_1} a^{k_1}_1 b^{k_1}_{l_1} \big) \cdots \big( \sum_{k_n} s^{k_n} a^{k_n}_n b^{k_n}_{l_n} \big) \\
  & = \sum^n_{l_1 l_2 \cdots l_n = 1} \varepsilon_{l_1 l_2 \cdots l_n} \\
  &\times \bigg[ \sum_{k_1 k_2 \cdots k_n} s^{k_1} s^{k_2} \cdots s^{k_n} a^{k_1}_1 b^{k_1}_{l_1}  \cdots a^{k_n}_n b^{k_n}_{l_n}  \bigg] \\
  & = \sum_{k_1 k_2 \cdots k_n} s^{k_1} s^{k_2} \cdots s^{k_n}  a^{k_1}_1 \cdots a^{k_n}_n \\
  &\times \bigg[  \sum^n_{l_1 l_2 \cdots l_n = 1} \varepsilon_{l_1 l_2 \cdots l_n}  b^{k_1}_{l_1} \cdots b^{k_n}_{l_n} \bigg]
\end{split}
\label{eq:svd_in}
\end{align}
Note that the inner summation with respect to $l_\nu$ for a specific $\{ k_\mu \}$ ($\mu, \nu = 1, 2, \cdots, n$) gives rise to a determinant, 
which can be denoted as $D^b_{[k_\mu]}$:
\begin{align}
\begin{split}
 D^b_{[k_\mu]} \equiv \sum^n_{l_1 l_2 \cdots l_n = 1} \varepsilon_{l_1 l_2 \cdots l_n}  b^{k_1}_{l_1} \cdots b^{k_n}_{l_n}
 \end{split}
\end{align}
This determinant corresponds to the sub-matrix $b^k_l$ formed by row $1, 2, \cdots, n$ and column $k_1, k_2, \cdots, k_n$.

In the outer summation of Eq. [\ref{eq:svd_in}], each index $k_\mu$ can range from $1$ to the number of singular values.
$D^b_{[k_\mu]}$ is non-zero only when $k_1, k_2, \cdots, k_n$ are not equal to each other, thus placing constraints on values of $k_\mu$ in the outer summation.

Next, we may consider the case where the $n$ values of $k_\mu$ are taken from $1, 2, \cdots, n$, without loss of generality.
Note that there is no ordering presumed in the outer summation of $k_\mu$, therefore summing over all $k_\mu$'s will
generate $n!$ permutations of $1, 2, \cdots, n$.
For these $n!$ permutations, the corresponding $D^b_{[k_\mu]}$ will have the same absolute value, and its $\pm$ sign depends on whether the permutation is odd or even,
due to the nature of determinant.
If we enforce ordering $k_1 < k_2 < \cdots < k_n$, 
we may use the Levi-Civita symbol to represent the sign due to permutation.
\begin{align}
\begin{split}
D  & = \sum_{k_1 k_2 \cdots k_n} s^{k_1} s^{k_2} \cdots s^{k_n}  a^{k_1}_1 \cdots a^{k_n}_n D^b_{[k_\mu]} \\
& =  \sum_{k_1 < k_2 <  \cdots < k_n} s^{k_1} s^{k_2} \cdots s^{k_n}  \\
& \times \sum_{l_1 l_2 \cdots l_n } a^{l_1}_1 \cdots a^{l_n}_n \varepsilon_{l_1 l_2 \cdots l_n} D^b_{[k_\mu]}
\end{split}
\end{align}
where the tuple $l_1 l_2 \cdots l_n$ iterate over all permutation of $k_1 < k_2 < \cdots < k_n$. 
Therefore,
\begin{align}
\begin{split}
D=\sum_{k_1 < k_2 < \cdots < k_n} s^{k_1} s^{k_2} \cdots s^{k_n} D^a_{[k_\mu]} D^b_{[k_\mu]}
\end{split}
\end{align}
\end{proof}

\begin{theorem} 
The absolute value of the determinant of an $n \times n$ sub-matrix of $\bm{\zeta}$ is bound by the product of its largest $n$ singular values provided by SVD
\begin{align}
\begin{split}
|D^{n\times n}| \leq \sum_{k_1 < k_2 < \cdots < k_n} s^{k_1} s^{k_2} \cdots s^{k_n} 
\end{split}
\end{align}
where the singular values satisfy $s^1 \geq s^2 \geq \cdots \geq s^n > 0$.
\label{theorem2}
\end{theorem}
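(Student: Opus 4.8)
The plan is to read the bound straight off the expansion established in Theorem~\ref{theorem1} and then to control the two ``structure constants'' $D^a_{[k_\mu]}$ and $D^b_{[k_\mu]}$ using only the orthonormality of the singular vectors. So the argument is essentially one line of estimation layered on top of the previous theorem.

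First I would invoke the identity $D=\sum_{k_1<k_2<\cdots<k_n} s^{k_1} s^{k_2}\cdots s^{k_n}\, D^a_{[k_\mu]} D^b_{[k_\mu]}$ proved above. Since the singular values are non-negative, applying the triangle inequality term by term gives $|D|\le \sum_{k_1<k_2<\cdots<k_n} s^{k_1} s^{k_2}\cdots s^{k_n}\,|D^a_{[k_\mu]}|\,|D^b_{[k_\mu]}|$. Hence the whole statement reduces to showing that every minor satisfies $|D^a_{[k_\mu]}|\le 1$ and $|D^b_{[k_\mu]}|\le 1$.

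Next I would use the defining property of the SVD: for the matrix $A=(a^k_i)$ whose columns are the left singular vectors we have $A^\dagger A = I$, so $AA^\dagger$ is an orthogonal projector and its diagonal entries $\sum_k |a^k_i|^2$ lie in $[0,1]$; that is, every row of $A$ has Euclidean norm at most $1$. The $n\times n$ matrix defining $D^a_{[k_\mu]}$ is obtained by keeping rows $i_1,\ldots,i_n$ and columns $k_1,\ldots,k_n$ of $A$, so each of its rows is a truncation of a row of $A$ and therefore still has norm $\le 1$. Hadamard's inequality — the absolute value of a determinant is at most the product of the Euclidean norms of its rows — then gives $|D^a_{[k_\mu]}|\le 1$, and the identical argument applied to $B=(b^k_j)$ gives $|D^b_{[k_\mu]}|\le 1$. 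Combining with the triangle inequality yields $|D^{n\times n}|\le \sum_{k_1<k_2<\cdots<k_n} s^{k_1} s^{k_2}\cdots s^{k_n}$, which is the claim; the ordering $s^1\ge s^2\ge\cdots$ is merely a labelling convention, and if $\bm{\zeta}$ has fewer than $n$ nonzero singular values both sides vanish and the bound is trivial.

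I expect the only genuine content to be the sub-step bounding $|D^a_{[k_\mu]}|$ and $|D^b_{[k_\mu]}|$, i.e.\ recognizing that an arbitrary minor of a matrix with orthonormal columns has modulus at most one. This can be phrased either through Hadamard's inequality as above, or, equivalently, by noting that deleting rows cannot increase the operator norm, so such a submatrix has all singular values $\le 1$ and hence determinant (the product of its singular values) $\le 1$; everything else is bookkeeping on top of Theorem~\ref{theorem1}.
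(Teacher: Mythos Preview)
Your proposal is correct and follows essentially the same route as the paper: invoke Theorem~\ref{theorem1}, apply the triangle inequality, and then bound $|D^a_{[k_\mu]}|,|D^b_{[k_\mu]}|\le 1$ via a Hadamard-type argument using the (ortho)normality of the singular vectors. The only cosmetic difference is that the paper phrases Hadamard geometrically (volume of a parallelepiped with edge lengths $\le 1$) and truncates the normalized vectors $\bm{a}^k$ in the $i$-direction, whereas you truncate rows in the $k$-direction after invoking the projector property of $AA^\dagger$; since the determinant is invariant under transposition, these are equivalent.
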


\begin{proof}
We may start by proving the absolute value of the determinant $D^a_{[k_\mu]}$ ($D^b_{[k_\mu]}$) is bound by $1$.
Without loss of generality, we may again assume $i_1 = 1, i_2 = 2, \cdots, i_n = n$ and $k_1 = 1, k_2 = 2, \cdots, k_n = n$.
According to its definition:
\begin{align}
\begin{split}
D^a_{[k_\mu]} = 
\begin{vmatrix} 
a^{1}_1 & a^1_2 &\cdots &  a^1_n \\
a^{1}_1 & a^2_2 &\cdots &  a^2_n \\
\vdots & \vdots & \ddots & \vdots \\
a^{n}_1 & a^n_2 & \cdots & a^n_n \\
\end{vmatrix}
\end{split}
\end{align}
Each row of the above determinant is a vector $\bm{a}^k_{1\times n} = (a^k_1, a^k_2, \cdots, a^k_n)$, 
which is a truncation of a full $1\times N$ vector $\bm{a}^k = (a^i_1, a^i_2, \cdots, a^i_N)$. 
$N$ is the first dimension of $\bm{\zeta}$ and $n \leq N$.
Since the matrix $a^k_i$ is obtained from SVD of $\bm{\zeta}$, each of its row vector is normalized to 1:
$|\bm{a}^k| = 1$, and therefore:
\begin{align}
\begin{split}
|\bm{a}^k_{1\times n} | \leq 1 
\end{split}
\end{align}
The geometric meaning of the determinant $D^a_{[k_\mu]}$ is the volume of a parallelepiped spanned by $n$ vectors $\bm{a}^k$,
where $k = 1, 2, \cdots, n$. 
Since the length of each of its edge $|\bm{a}^k_{1\times n} | \leq 1 $, then the volume of the parallelepiped will be no larger than 1, and thus $|D^a_{[k_\mu]} | \leq 1$ ($|D^b_{[k_\mu]} | \leq 1$).

Using the conclusion of theorem \ref{theorem1},
\begin{align}
\begin{split}
|D^{n \times n }| 
& \leq \sum_{k_1 < k_2 < \cdots < k_n} \big| s^{k_1} s^{k_2} \cdots s^{k_n} D^a_{[k_\mu]} D^b_{[k_\mu]} \big| \\
& \leq \sum_{k_1 < k_2 < \cdots < k_n} s^{k_1} s^{k_2} \cdots s^{k_n}
\end{split}
\end{align}
Note that all the singular values of $\zeta$ are non-negative.
\end{proof}

\subsection{An analysis of $\zeta$-matrices with singular-value decomposition}
\unskip
We analyze several representative examples here to illustrate the usefulness of the two theorems for $\bm{\zeta}$ matrices:
a 1D single-atom chain at half filling, the \ce{C} $K$ edge of graphite, the \ce{Cu} $K$ edge copper, the \ce{Li} $K$ edge of lithium metal, 
and the \ce{O} $K$ edge of rutile \ce{TiO2},  \ce{CrO2}, \ce{RuO3}, and \ce{LiCoO2}.
The crystal structures of \ce{RuO3}, and \ce{LiCoO2} are shown in Fig. \ref{structures}.
Both systems are layered structures that will allow lithium insertion/removal, and are being studied as cathode prototypes of rechargeable batteries.
The chosen systems are gapless except for \ce{TiO2} and \ce{LiCoO2}, which have a DFT-PBE band gap of $2.1$ and $1.8$ eV (with a Hubbard $U$ value of $3.3$ eV on the \ce{Co} atom) respectively. 

\begin{figure}[H]
\centering
\includegraphics[width=0.95\linewidth]{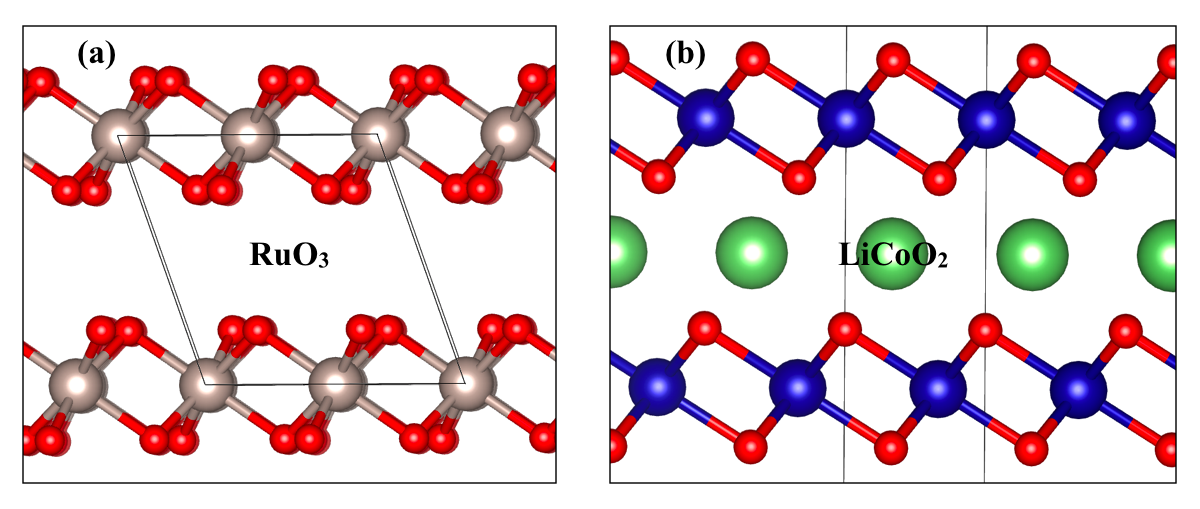}
\caption{layered structures (side view) of \ce{RuO3} (\textbf{a}) and \ce{LiCoO2} (\textbf{b}). The unit cells are outlined by the thin black lines.}
\label{structures}
\end{figure}   

The tight-binding model for the 1D chain reads:
\begin{align}
\begin{split}
H = -t \sum^N_{j = 1} \big( c^\dagger_{j + 1} c_j +  c^\dagger_{j } c_{j + 1}  \big)
\end{split}
\end{align}
where a periodic boundary condition is employed and each site has double occupancy.
The above model is considered as the initial-state Hamiltonian $H_i$.
The core-hole potential is assumed to act on a single site (the site at $j = 0$): $V_c = \Delta V c^\dagger_0 c_0$, thus the final-state Hamiltonian is $H_f = H_i + V_c$.
It can be solved that the 1p wavefunctions for $H_i$ are:
\begin{align}
\begin{split}
|k \sigma \rangle =  \frac{1}{\sqrt{N}}\sum^N_{j = 0} e^{i k j  } | j \sigma \rangle
\end{split}
\end{align}
where $k = 0, \frac{1}{N}, \cdots, \frac{N - 1}{N}$ and $\sigma = \uparrow, \downarrow$. 
We then define the 1p XAS matrix element as: 
\begin{align}
\begin{split}
\langle k\sigma | j = 0, \sigma \rangle =  \frac{1}{\sqrt{N}}
\end{split}
\end{align}

In the actual calculation, we choose the number of sites $N = 200$, the number of electrons $N_e = 200$ (half-filled), $t = 1$, 
a perturbation potential of $\Delta V = -100$ at the excited site to simulate the core hole effect (set to 100 for exaggeration).
We find the determinant spectrum barely changes after $\Delta V < -100$. 

\begin{figure}
\centering
\includegraphics[width=0.95\linewidth]{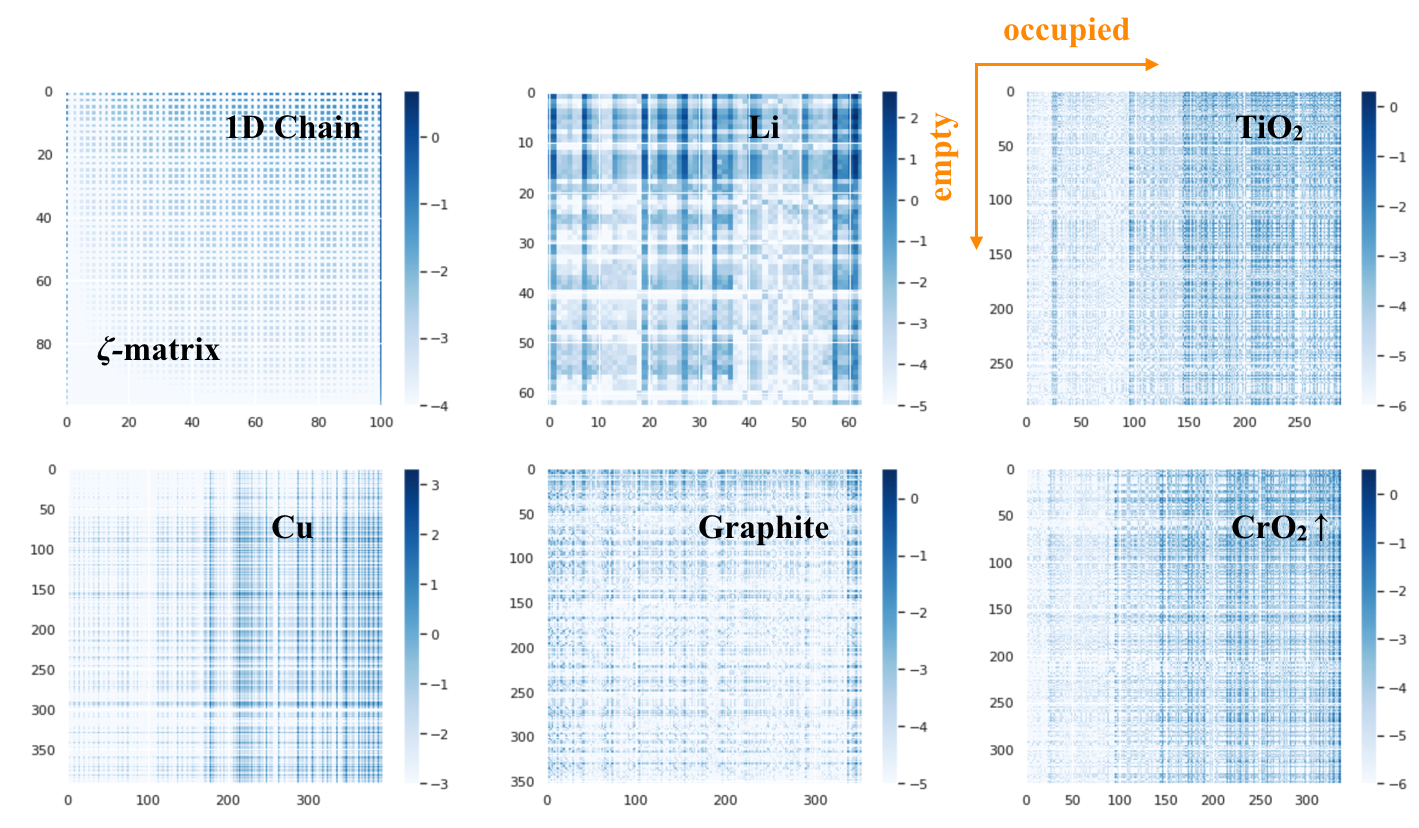}
\caption{Heat-map plot of $log_{10} |\zeta_{ij}|$ of the XAS $\zeta$ matrices of several chosen systems for illustrating the cross-stripe pattern.
If not plot in logarithm scale, the $\zeta$ matrices will appear to be sparse.
The rows correspond to empty orbitals and the columns to occupied orbitals. 
The number of occupied orbitals in the 1D Chain, Li, \ce{TiO_2}, Cu, Graphite, and \ce{CrO_2} (the spin-up channel) supercell are $100$, $64$, $288$, $352$, $392$ and $336$ respectively, 
which determines the number of columns. }
\label{zeta}
\end{figure} 

First, we plot the $\zeta$ matrices for the XAS (\ce{Li}, \ce{C}, \ce{O}, and \ce{Cu} $K$ edges) of six chosen systems in order to exemplify the aforementioned stripe pattern.
Once we have completed the $\Delta$SCF calculation for both the initial and final state, the $\zeta$ matrices can be calculated using Eq. [\ref{eq:zeta_def}]. 
Several $\zeta$ matrices are shown in Fig. \ref{zeta}. 
Although the chosen systems are vastly different in terms of crystal symmetry and bonding nature, it can be seen that all the plotted $\zeta$ matrices display 
a cross-strip pattern that runs both vertically and horizontally.
This cross-strip pattern suggested that the $\zeta$ matrix for x-ray excitations can be universally expressed in terms of Kronecker product of two vectors: $\zeta_{ij} \sim a_i b_j$, with a few small residual terms.
Thus it becomes natural to introduce an SVD analysis to the $\zeta$ matrix, and apply the two theorems proved above to determine the spectral convergence with respect to shakeup order.

\begin{figure*}
\centering
\includegraphics[width=0.95\linewidth]{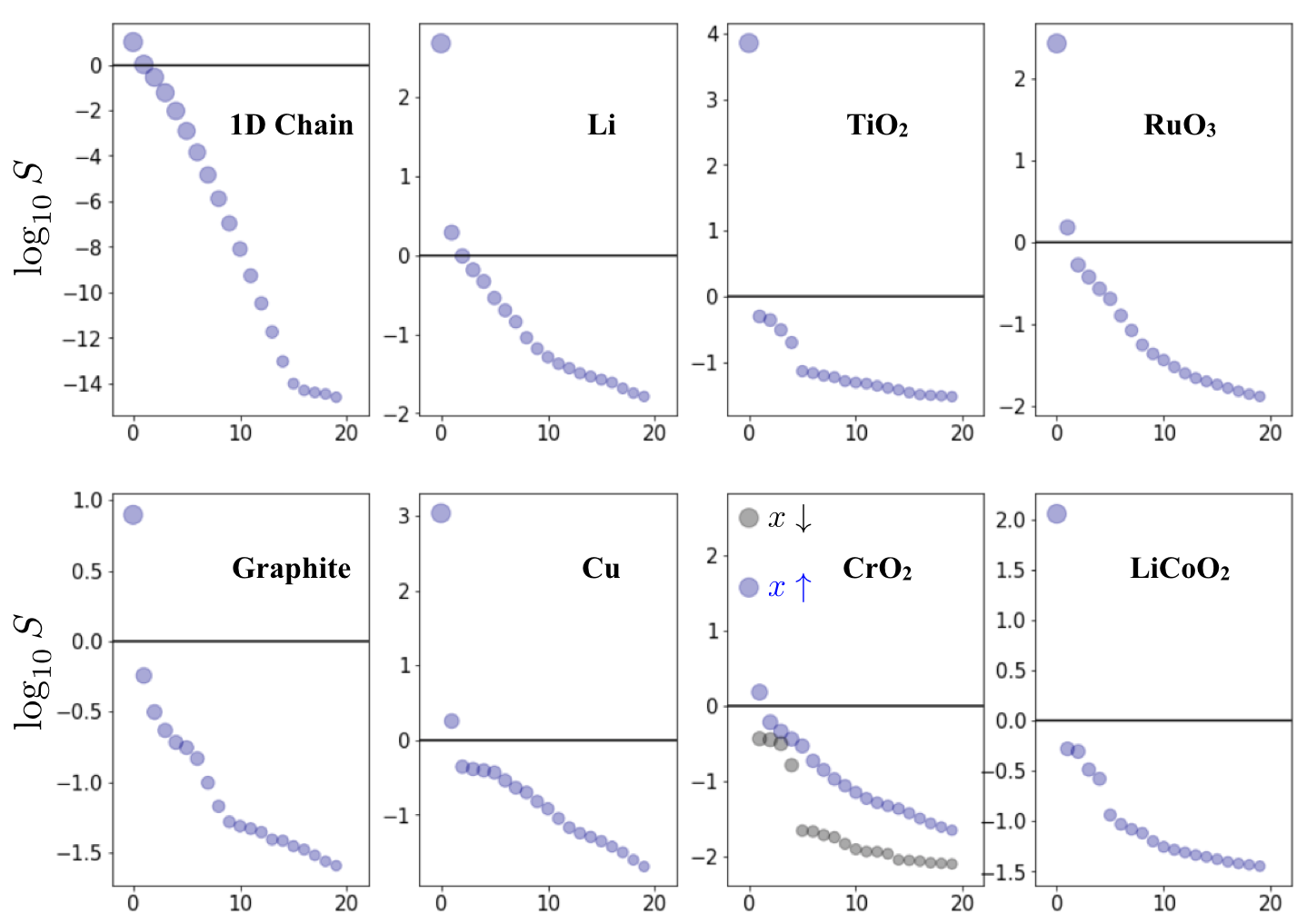}
\caption{20 most significant singular values of the $\zeta$ matrices. 
The singular values are positive definite and scaled logarithmically.
The horizontal line at $S = 1$ is used to divide values larger than 1 and smaller than 1.
For \ce{CrO2}, two series of singular values for spin-wise $\zeta$ matrices are plotted.}
\label{svd}
\end{figure*}   

The twenty most significant singular values of the $\zeta$ matrices of the chosen systems are shown in Fig. \ref{svd} (the singular values are plotted logarithmically). 
The first 5 largest singular values (denoted as $s^i$) are also tabulated in Table. \ref{tab:svd}.
It can immediately be seen for all the $\zeta$ matrices that only a few singular values are larger than $1$, 
especially for graphite, \ce{TiO2}, \ce{LiCoO2} and the insulating (spin-down) channel of \ce{CrO2} with only 1 $s^i > 1.0$.
The latter three are systems with a large band gap.
The other systems only have two singular values larger than $1$.
Note that in \ce{Li}, $s^3 = 0.977$ and $s^4 = 0.653$, which can be rounded to 1.
In all the cases, $s^i < 0.1$ for $i > 10$, which are orders of magnitude smaller than $s^1$.
This means only a few singular values $s^i$ are relevant for the analysis of shakeup orders.

\setlength\extrarowheight{5pt}
\setlength{\tabcolsep}{5pt}
\begin{table*}
\caption{The largest five singular values of the zeta matrices of the studies systems, and their accumulative products.}
\centering
\begin{tabular}{r|rrrrr|rrrr}
\toprule
\textbf{System}	& \textbf{$s^1$}	& \textbf{$s^2$} & \textbf{$s^3$}	& \textbf{$s^4$} & \textbf{$s^5$} & \textbf{$s^1 s^2$} & \textbf{$s^1 s^2 s^3$} & \textbf{$s^1 s^2 s^3 s^4$} & \textbf{$s^1 s^2 s^3 s^4 s^5$} \\
\hline
1D Chain		& 9.9683 &    1.0287 &    0.2810 &    0.0592 &    0.0094		&   10.2545 &    2.8820 &    0.1707 &    0.0016\\
Li			&  478.8850 &    1.9354 &    0.9774 &    0.6528 &    0.4657	&  926.8103 &  905.8236 &  591.2829 &  275.3625\\
\ce{TiO2}		& 7092.6499 &    0.5008 &    0.4417 &    0.3118 &    0.2006		& 3552.0365 & 1569.0875 &  489.1961 &   98.1572\\
\ce{RuO3} 	&  267.8267 &    1.5267 &    0.5323 &    0.3790 &    0.2721	&  408.8852 &  217.6645 &   82.4916 &   22.4428   \\
Graphite		&    7.8617 &    0.5691 &    0.3137 &    0.2325 &    0.1913 &    4.4744 &    1.4035 &    0.3264 &    0.0624 \\
\ce{Cu}		& 1076.8833 &    1.7903 &    0.4386 &    0.4087 &    0.3919. & 1927.8910 &  845.5359 &  345.5765 &  135.4185 \\
\ce{CrO2}$\uparrow$ &   37.0903 &    1.5178 &    0.6071 &    0.4586 &    0.3631 &   56.2962 &   34.1768 &   15.6735 &    5.6917	\\
\ce{CrO2}$\downarrow$ &  310.3019 &    0.3667 &    0.3556 &    0.3157 &    0.1634 &  113.7812 &   40.4625 &   12.7736 &    2.0867	\\
\ce{LiCoO2}	&  113.3031 &    0.5227 &    0.4932 &    0.3257 &    0.2632 &   59.2231 &   29.2065 &    9.5120 &    2.5036	\\ 
\hline
\end{tabular}
\label{tab:svd}
\end{table*}

With these numerical results and the properties of $\{s^i\}$, we can revisit the two theorems and discuss the contributions from different shake-up order $f^{(n)}$ to the x-ray spectra.
Because the vast majority of $s^i$ are small, we could take the product of the leading $s^i$ to form an analysis of order of approximation.
As indicated in the Theorem \ref{theorem1}, the $n \times n $ sub-determinants corresponding to the $f^{(n)}$ terms are associated with the coefficients $s^{k_1} s^{k_2} \cdots s^{k_n}$.
We may thus take the largest $n$ singular values $s^1$ and calculate their product to estimate the contribution of $f^{(n)}$:
\begin{align}
\begin{split}
f^{(1)}:& s^1 \\
f^{(2)}:& s^1 s^2 \\
f^{(3)}:& s^1 s^2 s^3 \\
\cdots \\
\end{split}
\end{align}
The accumulative product $s^1 s^2 \cdots s^n$ of some leading-order singular values are also shown in Table. \ref{tab:svd}.
For \ce{Li}, \ce{RuO3}, \ce{Cu}, \ce{CrO2}$\uparrow$, $s^1 s^2$ is significantly larger than $s^1$ because $s^2 > 1.5$.
This suggests the contribution of the $f^{(2)}$ terms is of the same order of magnitude of $f^{(1)}$.
An extreme case is \ce{Li} in which even $s^1 s^2 s^3 s^4 \approx 591$ is larger to $s^1 \approx 479$.
This requires one to go beyond $f^{(1)}$ in the XAS determinant calculation for these systems.

\subsection{XAS calculated by the determinant formalism}

To test how good this empirical estimate is, we perform determinant calculations for XAS of the chosen systems, at least at the order of $f^{(2)}$.
The obtained spectra decomposed by different shake-up orders are shown in Fig. \ref{spectra} (odd rows).
We find it is indeed true that the contribution from the $f^{(2)}$ terms is as significant as $f^{(1)}$ in \ce{Li}, \ce{RuO3}, \ce{Cu}, \ce{CrO2}$\uparrow$ (as studied in Ref. \cite{liang2017accurate, liang2018quantum}).
In every case, the contribution from $f^{(2)}$ constitutes more than $40\%$ of the entire computed spectrum.
In particular for \ce{Li}, a trend of convergence is only seen after including the $f^{(4)}$ terms.
The contributions from $f^{(1)}$, $f^{(2)}$, $f^{(3)}$, $f^{(4)}$, and $f^{(5)}$ to the full spectrum ($f^{(1)}+f^{(2)}+f^{(3)}+f^{(4)}+f^{(5)}$) are $29.7 \%$, $51.3 \%$, $14.8 \%$, $3.7 \%$, and $0.5 \%$.
This is consistent with the fact that $s^1 s^2 \approx 927$,  $s^1 s^2 s^3 \approx 906$, and $s^1 s^2 s^3 s^4 \approx 591$ are comparable to $s^1 \approx 479$ in this system.
But note that the contribution $f^{(n)}$ is not entirely proportional to the accumulative product $s^1 s^2 \cdots s^n$.
It is observed that $f^{(n)}$ decays more quickly than $s^1 s^2 \cdots s^n$ as $n$ increases.
This is because we haven't taken into account the decomposed determinants $D^a_{[k_\mu]}$ and $D^b_{[k_\mu]}$ in this empirical estimate, and $|D^a_{[k_\mu]}|, |D^b_{[k_\mu]}| < 1$ holds strictly as proved in Theorem \ref{theorem2}. 

In other investigated systems such as graphite, \ce{LiCoO2}, \ce{TiO2}, and \ce{CrO2}$\downarrow$, 
the contribution from the $f^{(2)}$ terms is noticeably less significant, constituting less than $30\%$ of the full spectrum.
Their corresponding second largest singular values $s^2 < 0.6$, suggesting that $s^1 s^2 < s^1$ and hence $f^{(2)}$ would not be as important as the $f^{(1)}$ terms. 
One severe deviation from this estimate would be the tight-binding 1D chain.
The corresponding $s^1 s^2 \approx 10.25$, which is comparable to $s^1 \approx 9.97$.
However, the $f^{(2)}$ contribution is tiny, about $6.1\%$ of the full ($f^{(1)} + f^{(2)}$) spectrum.
Again, this is because the decomposed determinants $D^a_{[k_\mu]}$ and $D^b_{[k_\mu]}$ are missing.
In this regard, it would be better to view the accumulative product $s^1 s^2 \cdots s^n$ as the \emph{upper bound} of the $f^{(n)}$ contribution.
In other words, if $s^1 s^2 \cdots s^n$ is significantly small compared to the leading order terms ($s^1$ and $s^1 s^2$), then it is already safe to neglect the $f^{(n)}$ terms.

\begin{figure*}
\centering
\includegraphics[width=0.95\linewidth]{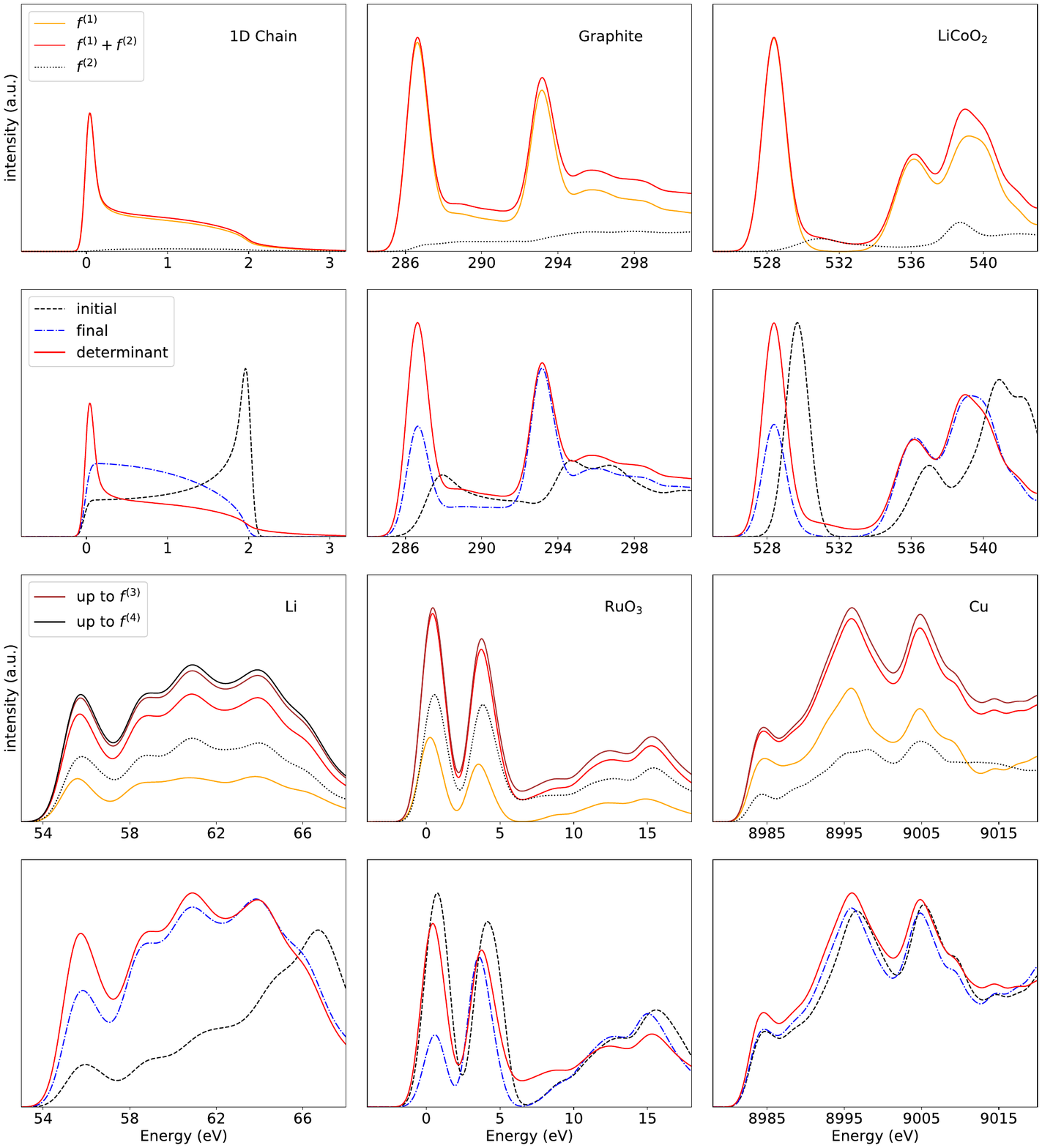}
\caption{(color online) \textbf{Odd rows}: Comparison of determinant spectra at different shake-up orders. \textbf{Even rows}: comparison of determinant spectra and initial- and final-state spectra. 
A high-order determinant spectrum is not plotted if it appears to overlap on the determinant spectrum at one lower order (for example, the \ce{Li} spectrum up to $f^{(5)}$ appear to overlap with the one up to $f^{(4)}$). 
The determinant spectra are scaled for comparing to the initial- and final-state spectra, 
according to the absorption peak at around 293 eV for graphite, 536 eV for \ce{LiCoO2}, 64 eV for \ce{Li}, 4 eV for \ce{RuO3}, and 9005 eV for \ce{Cu}.
}
\label{spectra}
\end{figure*} 

To define the smallness of the cumulative product, we may introduce the ratio $\eta_n$ such that: $\eta_n \max\limits_m \{s^1 s^2 \cdots s^m\} = s^1 s^2 \cdots s^n $,
where $\max\limits_m \{s^1 s^2 \cdots s^m\}$ is the maximum of the cumulative product.
According to the examples studied in this work, we find one could safely neglect the $f^{(n)}$ configurations with $\eta_n < 0.5$.

Although we have gone beyond $f^{(2)}$ configurations in the determinant calculation, 
\emph{we still find the spectra for all the cases studied in this work can be well defined by just the configurations up to $f^{(2)}$}.
This applies even to \ce{Li}, in which up to $f^{(5)}$ configurations are included.
The $f^{(2)}$ configurations slightly increase intensities of the two absorption humps at $61$ and $64$ eV compared to the absorption peak at $56$ eV.
And the spectra beyond $f^{(2)}$ (summed from $f^{(1)}$ to $f^{(n)}$ for $n = 3, 4, 5$) can basically be reproduced by scaling up the $f^{(1)} + f^{(2)}$ spectrum using an energy-independent factor.

Here, we take a closer examination of the intensity contribution from $f^{(2)}$ configurations (as shown by black dotted curves in Fig. \ref{spectra}).
The $f^{(2)}$ contributions in the 1D chain, graphite, and \ce{LiCoO2} (the first row of Fig. \ref{spectra}) are different from the ones as in \ce{Li}, \ce{RuO3}, and \ce{Cu} (the third row of Fig. \ref{spectra}). 
In the former group, the $f^{(2)}$ contribution does not have as many absorption features compared to $f^{(1)}$.
In particular, the $f^{(2)}$ contribution does not have a well-defined peak at the absorption edge as manifests in $f^{(1)}$. 
In the latter group,  the $f^{(2)}$ contribution mimics the $f^{(1)}$ contribution in that they have similar absorption structures (peaks).
All the absorption peaks that appear in  $f^{(1)}$ also appear in $f^{(2)}$, including the peaks at the absorption onset.
A primary reason for the $f^{(2)}$ contribution mimicking the $f^{(1)}$ one in \ce{Li}, \ce{RuO3}, and \ce{Cu} is due to low-energy \emph{e-h} pair production in these metallic systems.
There are many orbitals of similar energies near the Fermi level in the final-state system (the supercell that contains an impurity for modeling core-excited states).
More speficially, there are more than 12 orbitals within an energy window of 0.3 eV at the $\Gamma$-point of the supercell, as found by the $\Delta$SCF calculations.
Some of these orbitals are occupied and some are not.
The majority of the $f^{(2)}$ configurations in these systems can be understood as a transition from the core level to one empty final-state orbital $c$, 
coupled with a low-energy \emph{e-h} pair (as labeled by $c_1$ and $v_1$), whose excitation energy is negligibly small.
Therefore the spectral contribution from $f^{(2)}$ mimics the one from $f^{(1)}$, which is only defined by empty final-state orbitals, in a one-body manner.
\emph{This partially explains why the one-body final-state method can often reproduce the XAS of many metallic systems adequately,  even though the physics of MND theory is completely missing}.

The $f^{(2)}$ contributions in the 1D chain, graphite, and \ce{LiCoO2} are more complex to analyze, 
which involves not just low-energy \emph{e-h} pairs near the Fermi level / band gap.
In graphite, for instance, many $f^{(2)}$ configurations that contribute modestly to XAS are composed of a core-level transition to $c$ which is coupled to an \emph{e-h} pair with $c_1$ being a low-lying empty orbital ($\pi^*$), and $v_1$ going over the continuum of occupied orbitals that span more than $8$ eV ($\pi$ and $\sigma$ continuum).
So the transition energy should be added from the two \emph{e-h} pairs:
$E = (\varepsilon_c - \varepsilon_h) + (\varepsilon_{c_1} - \varepsilon_{v_1})$, 
in which both $\varepsilon_c - \varepsilon_h$ and $\varepsilon_{c_1} - \varepsilon_{v_1}$ can vary across a wide energy range.
This explains why the $f^{(2)}$ spectral contribution in graphite is smeared out without well-defined peaks.
Similar analysis applies to the $f^{(2)}$ continuum between $530$ and $534$ eV in \ce{LiCoO2}, 
which involves transitions from the core level and some valence orbitals $v_1$ (mainly \ce{O} $2p$) around $4$ eV below the valence band maximum (VBM)
to two low-lying empty orbitals $c$ and $c_1$.
This $f^{(2)}$ continuum fills out the energy gap between $530$ and $534$ eV, 
and explains why there is no spectral gap between near-edge peaks ( $< 6$ eV from onset) and the high-energy humps ($> 6$ eV from onset)
in the \ce{O} $K$ edge of TMOs, 
although there is no single-body orbital within this gap.
Such is true for the energy gap near $6$ eV in \ce{RuO3}.

It should be noted that configuration interaction is absent from the above analysis of $f^{(2)}$ configurations.
Interaction within $f^{(2)}$ may introduce excitonic effects between $c_1$ and $v_1$ and plasmon excitations.
So far there is no electron-plasmon (plasmaron) coupling present in the determinant approach.
Configuration interaction between $f^{(1)}$ and $f^{(2)}$ may remix the spectral contribution from the two sets of configurations and modify the shakeup effects. 
So far the ratio of the intensities of the near-edge peaks to the high-energy humps in TMOs is still too high in the determinant calculation up to $f^{(2)}$, as compared with experiments.
Introducing configuration interaction may resolve this problem, which is beyond the scope of this work.

\subsection{Initial-state, final-state, and the determinant XAS}

Lastly, we rationalize the determinant calculation by comparing the determinant spectra with the initial- and final-state spectra (even rows in Fig. \ref{spectra}).

For the 1D tight-binding chain, the MND effects manifest as an asymmetric, power-law singularity that diverges at the absorption edge.  
Note that this singularity has already been reproduced at the $f^{(1)}$ level.
Thanks to the final-state orbitals in the determinant formalism, 
one does not need to go over many orders of Feynman diagrams expanded in the initial-state orbitals to produce the edge singularity.
The determinant spectrum does not resemble to the one-body final-state spectrum, which looks like a "half dome", 
making the inclusion of MND effects essential to the spectral calculation.
The singularity at $2$ eV is the van Hove singularity due to the 1D band edge.

Similar MND effects also manifest in the XAS (polarization vector is $45\deg$ off-plane) of graphite.
After the correction of the determinant approach, the first-peak (around $286$ eV, due to $\pi^*$) intensity is significantly magnified compared with the final-state XAS.
The corrected intensity ratio of the first peak to the second (around $293$ eV) is $1.25$ ($0.65$ in the final-state spectrum), which is in good agreement with $1.35$ in a previous experiment
\cite{weser2010induced}. 
The spectral plateau between $288$ and $292$ eV (due to the constant joint DOS in 2D systems) is also tilted upward at the low-energy end, due to the MND effects.
Such MND effects in graphite were also obtained from first-principles using a more complex approach based on Green's function, 
which involves energy and time integral, and Fourier transformation \cite{wessely2005ab, wessely2006dynamical, ovcharenko2013specific}.
Here, the determinant approach provides equivalent spectra by only a one-shot matrix inversion and a heuristic search of computationally accessible configuration space.

Significant intensity correction of the peak at onset is also observed in \ce{LiCoO2}, \ce{Li}, and \ce{RuO3}.
In \ce{LiCoO2}, the intensity ratio of the peak at $528$ eV to the peak at $532$ eV as found by the determinant approach is $2:1$, 
which is in good agreement with a previous measurement \cite{juhin2010angular}.
In the same work, the intensity ratio by the one-body final-state approach with GGA + U is $1:1$, which is consistent with our calculation. 
In \ce{RuO3}, the intensity ratios of two near-edge peaks are reversed after correction, which was also reported in our previous work on 3d TMOs \cite{liang2017accurate, liang2018quantum}.

The Li $K$ edge calculated from the determinant formalism (at the level of $f^{(4)}$) for \ce{Li} metal is in good agreement with a previous experiment \cite{miedema2014in_situ}.
Thermal vibration of the lattice could further modify the spectrum \cite{pascal2014finite}, and may broaden the peaks at 61 and 63 eV, making it in closers agreement with the experiment.
However, thermal effects will not be discussed in this work, because we focus on the convergence with respect to shakeup orders.

The core-hole attraction effect is most significant in the 1D Chain, graphite, and \ce{Li}, although they are gapless systems.
After the inclusion of the core hole, the initial-state spectra dramatically redshift.
However, the core-hole effect only causes the initial-state spectra to redshift rigidly in \ce{LiCoO2}, \ce{RuO3}, and \ce{Cu}.
The redshifts of the lowest-energy peaks in  \ce{LiCoO2} and \ce{RuO3} are $1.28$ and $0.34$ eV respectively, 
which are not negligibly small, although the core hole is at the \ce{O} site where the near-edge orbitals are mainly composed of TM $d$ orbitals.
This explains why initial-state spectra are sometimes good approximation to XAS for \ce{O} $K$ edges in TMOs, as discussed in Ref. \cite{de2008core}.

\section{Conclusions}

In summary,  we have introduced two theorems for regulating the convergence of the determinant calculation, using a SVD analysis over the auxiliary $\zeta$-matrix.
The convergence with respect to the excitation order $n$ depends on the number of the significant singular values of the $\zeta$ matrix.
We show that the cumulative product of the singular values $s^1 s^2 \cdots s^n$ can be used as an effective estimate for the $f^{(n)}$ contribution.
It is found empirically that it is safe to neglect the $f^{(n)}$ contribution and higher-order when $s^1 s^2 \cdots s^n < 0.5 \max\limits_m \{s^1 s^2 \cdots s^m \}$.
However, satisfactory determinant spectra have been achieved at the order of $f^{(2)}$ for all the examined cases (in this work and the TMOs as in Ref. \cite{liang2018quantum}).

\section*{Acknowledgement} Theoretical and computational work was performed by Y. L. and D. P. at The Molecular Foundry, which is supported by the Office of Science, Office of Basic Energy Sciences, of the United States Department of Energy under Contact No. DE-AC02-05CH11231. Computations were performed with the computing resources at the National Energy Research Scientific Computing Center (NERSC).

\appendix
\section{Computational Details}
The single-body energies and orbitals for both the initial- and final-state systems are obtained with DFT $\Delta$ SCF calculations, as described in Ref. \cite{liang2017accurate, liang2018quantum}.
Modified pseudopotentials (by changing $1s^2$ to $1s^1$) are generated for simulating the core-hole potential of $K$ edges.
Supercell dimensions are chosen to be approximately $10$\AA{} that is sufficient to minimize spurious periodic interactions among the core-hole impurities.
The $\Delta$ SCF calculations for TMOs are performed using the DFT+$U$ theory \cite{dudarev1998electron} with the $U$ value adopted from Ref. \cite{wang2006oxidation}.
A $5\times 5 \times 5$ $\bm{k}$-grid is used to sample the BZ of the supercell, which is essential for high-energy scattering states and metallic systems. 
The DFT part of the calculations is performed using a local repository of the ShirleyXAS code, which is available at the David Prendergast's group at the Molecular Foundry.

The determinant calculations are performed using an open source software package, MBXASPY, which is available at \url{https://github.com/yufengliang/mbxaspy}.

\end{document}